
%

\documentclass[envcountsame,runningheads,11pt]{llncs}
\usepackage[margin=1in]{geometry}
\usepackage{setspace}
\usepackage{amssymb,amsmath}
\setcounter{tocdepth}{3}
\usepackage{graphicx}
\usepackage{url}
\usepackage{paralist}
\usepackage{algorithm}
\usepackage{environ}
\usepackage{cite}
\usepackage{tikz}
\usetikzlibrary{shapes,calc,intersections}

\urldef{\mailsc}\path | {ashishg,rmhulett,anilesh}@stanford.edu |




\newtheorem{result}{Result}
\newcommand{\cc}{\mathcal{C}}
\newcommand{\V}{\mathcal{V}}

\newcommand{\s}{\mathcal{S}}

\newcommand{\vpr}{v^\prime}

\newcommand{\Rplus}{\mathcal{R}_{\geq 0}}

\newcommand{\cpr}{c^\prime}

\usepackage{bbm}
\usepackage{color}

\newcommand{\dpr}{d^\prime}

\newcommand{\dist}{\mathrm{distortion}}
\newcommand{\fratio}{\mathrm{fairness}}

\newcommand{\opt}{c_\mathrm{opt}}
\newcommand{\alg}{c_\mathrm{alg}}
\newcommand{\argmin}{\mathrm{argmin}}
\newcommand{\argmax}{\mathrm{argmax}}

\begin{document}

\mainmatter  
\singlespacing
\title{Relating Metric Distortion and Fairness of Social Choice Rules}


\author{Ashish Goel, Reyna Hulett, Anilesh K. Krishnaswamy\\
\mailsc}
\authorrunning{Goel, Hulett and Krishnaswamy}

\institute{
Stanford University \\
}
\maketitle

\begin{abstract}
One way of evaluating social choice (voting) rules is through a utilitarian distortion framework. In this model, we assume that agents submit full rankings over the alternatives, and these rankings are generated from underlying, but unknown, quantitative costs. The \emph{distortion} of a social choice rule is then the ratio of the total social cost of the chosen alternative to the optimal social cost of any alternative; since the true costs are unknown, we consider the worst-case distortion over all possible underlying costs. Analogously, we can consider the worst-case \emph{fairness ratio} of a social choice rule by comparing a useful notion of fairness (based on approximate majorization) for the chosen alternative to that of the optimal alternative. With an additional metric assumption -- that the costs equal the agent-alternative distances in some metric space -- it is known that the Copeland rule achieves both a distortion and fairness ratio of at most 5. For other rules, only bounds on the distortion are known, e.g., the popular Single Transferable Vote (STV) rule has distortion $O(\log m)$, where $m$ is the number of alternatives. We prove that the distinct notions of distortion and fairness ratio are in fact closely linked -- within an additive factor of 2 for any voting rule -- and thus STV also achieves an $O(\log m)$ fairness ratio. We further extend the notions of distortion and fairness ratio to social choice rules choosing a \emph{set} of alternatives. By relating the distortion of single-winner rules to multiple-winner rules, we establish that Recursive Copeland achieves a distortion of 5 and a fairness ratio of at most 7 for choosing a set of alternatives.
\end{abstract}


\section{Introduction}\label{sec:intro}
Social choice theory studies the aggregation of agent preferences into a single collective decision via a social choice rule. Often these preferences are expressed as total orderings over a set of possible alternatives, and a social choice rule maps any instance of preferences to one or more alternatives. The traditional approach to evaluating the quality of social choice rules has been a normative, axiomatic one. A great deal of work has been done to propose various axioms or properties that social choice rules ought to satisfy -- such as strategy-proofness or the majority winner criterion -- and evaluate different rules by which criteria they meet \cite{arrow2012social}. Unfortunately, even small sets of natural axioms may be impossible to satisfy simultaneously. For instance, the celebrated Gibbard-Satterthwaite Theorem rules out the existence of strategy-proof, non-dictatorial social choice rules with more than two alternatives \cite{gibbard1973manipulation,satterthwaite1975strategy}. Thus we must either accept social choice rules which fail to satisfy some natural properties, or make assumptions which limit the possible agent preferences but permit strategy-proof rules \cite{barbera2001introduction,moulin1980strategy}.

An alternative to the axiomatic approach, which has lately received much attention in the field of computational social choice \cite{boutilier2015optimal,caragiannis2011voting,procaccia2006distortion}, is to adopt a utilitarian view -- agents express their ordinal preferences by ranking the alternatives, but have latent \emph{cardinal} preferences over the alternatives. In particular, much work \cite{anshelevich2015approximating,anshelevich2016randomized,skowron2017social} has been done on the \emph{metric distortion} problem \cite{anshelevich2015approximating}. Under this model, agents and alternatives are assumed to lie in an unknown, arbitrary metric space, and an agent's cost for an alternative is given by the distance between the two. Social choice rules are viewed as approximation algorithms trying to choose the alternative with the lowest social cost, given access to only the ordinal preferences of agents. Similar to the competitive ratio of online approximation algorithms, the quantity of interest here is the worst-case value (over all possible underlying costs) of the \emph{distortion} -- the ratio of the social cost of the chosen alternative to that of the optimal alternative, chosen omnisciently \cite{anshelevich2015approximating}.
In this setting, the best known positive result for deterministic social rules is that the distortion of the Copeland rule is at most $5$ \cite{anshelevich2015approximating}. Another recent result establishes that the distortion of the Single Transferable Vote (STV) rule, which is widely used in practice, is $O(\log m)$ \cite{skowron2017social}, where $m$ is the number of alternatives. 

Under the metric distortion approach, another important question is to quantify how ``fair'' choosing a particular alternative is. Since the costs incurred could vary widely among the agents, minimizing the total cost might not be the only useful objective. For instance, imagine there are two agents and two alternatives, and the costs for the agents are $\{1,3\}$ for the first alternative, and $\{2,2\}$ for the second. It is natural to expect that under any reasonable notion of fairness the second is more desirable than the first. Although many notions of fairness exist, not all are applicable in this setting, and some cannot even be bounded for any social choice rule \cite{goel2017metric,anshelevich2015approximating}. We use the \emph{fairness ratio} \cite{goel2017metric}, which is a simultaneous bound, over all $k$, on the distortion of the social cost objective given by the sum of the $k$ largest agent costs. This definition of fairness is based on the notion of approximate majorization \cite{goel2006simultaneous}, which generalizes the various notions of fairness that have been studied in the context of routing, bandwidth allocation and load balancing problems \cite{kleinberg1999fairness,kumar2000fairness,goel2001approximate}. For instance, it includes as special cases both total cost minimization (utilitarianism), when $k=N$ (where N is the total number of agents), and max-min fairness (egalitarianism), when $k=1$. A bound on the fairness ratio also translates to an approximation guarantee on a wide class of convex objective functions \cite{goel2017metric}. For example, a constant factor bound on the fairness ratio implies the same bound on the distortion of social cost objectives such as the geometric mean, or any $l_p$ norm of the agent costs \cite{goel2017metric}. Thus, this interesting notion of fairness captures a wide range of desirable properties.

It is known that Copeland achieves a fairness ratio of at most 5, thereby yielding a constant-factor approximation for a large class of convex objectives \cite{goel2017metric}. Since it also achieves a distortion of at most 5, a natural question arises as to whether the distortion and fairness ratio are inherently connected. For many popular voting rules -- such as scoring rules and STV -- bounds are known on the distortion but not the fairness ratio. Could the distortion bounds for these rules extend to the fairness ratio as well, as is the case with Copeland? In particular, does any bound on distortion directly imply a bound on the fairness ratio? The primary aim of this paper is to answer this question.

The majority of previous work on the metric distortion problem looks only at social choice rules that choose a single alternative and not those that choose a set of alternatives. To address this lacuna, we also study the problem of establishing bounds on the distortion and fairness ratio for the case of choosing a set of winners, given the size of the desired set. Applications of multi-winner elections are quite diverse such as proportional representation in parliament \cite{monroe1995fully,betzler2013computation}, selecting a diverse committee \cite{chamberlin1983representative} (e.g. locations of fire stations), offering a selection of movies to passengers on a flight \cite{lu2011budgeted}, and shortlisting candidates for a job interview \cite{barbera2008choose,elkind2017properties}. We will focus on settings where a set of resources has to be chosen to serve a community of voters. Some examples of these setting are choosing overlay networks in the Internet \cite{andersen2001resilient}, or a set of public projects to implement given a fixed budget \cite{cabannes2004participatory,goel2015knapsack}. We will also assume in such settings that the preferences are additive -- the total cost for an agent is just the sum of her costs over the chosen set of alternatives. It is fairly straightforward to see that a repeated application of Copeland to choose a set of winners yields a bound of 5 on the distortion. In fact, we establish that any single-winner rule can be applied recursively to obtain a multiple-winner rule with the same bound on distortion. However, it was not known if a constant-factor bound on the fairness ratio is possible in this case. We answer this question in the positive by extending the distortion-fairness relationship to rules choosing a set of winners, thus establishing that the set chosen by a repeated application of Copeland achieves a bound of at most 7 on the fairness ratio.

\subsection{Our Results}

Our primary focus in this paper will be to quantify the relationship between the distortion and fairness ratio for social choice rules that choose a single winner, which also leads to new results for any rules where bounds are known on the distortion but not the fairness ratio. We then extend these results to rules that choose a set of winners, and characterize upper bounds on distortion and fairness in this setting. We also provide a lower bound on the gap between the distortion and fairness ratio.

\emph{Note: All our proofs are provided separately in the appendix.}

\subsubsection{Distortion and Fairness for Single-Winner Rules}

Given the ordinal preferences of the agents, the distortion of an alternative is the worst case value (over all possible underlying metrics) of the ratio between the sum of agent costs for it, and the sum of the agent costs for the optimal alternative.  For studying fairness under the metric distortion framework, we instead consider the ratio of the sum of the $k$ largest agent costs for an alternative to the $k$ largest costs for the optimal alternative, and take the maximum value among these ratios over all possible values of $k$. The overall fairness ratio is the worst case value of this quantity over all possible underlying metrics. Since for $k=N$ we get the sum of costs objective, the fairness ratio is at least as large as distortion.

Our main result is that the distortion and fairness ratio are closely related. For any given ordinal preferences, and any alternative, we show that the fairness ratio cannot be much larger than the distortion. 

\begin{result}[Theorem~\ref{thm:2}]
Given the ordinal preferences of the agents, the fairness ratio of any alternative is at most 2 more than its distortion.
\end{result}

We also present instances where the ``obvious'' winner for any reasonable social choice rule (which is also the distortion minimizing alternative) has a distortion-fairness gap approaching 2.

\begin{result}[Theorem~\ref{thm:tight}]
There exist instances where the gap between the distortion and fairness ratio of the distortion-optimal alternative approaches 2.
\end{result}

In fact, because the above results hold for any set of preferences and any metric space, they also imply that for any social choice rule, the worst-case fairness ratio is at most 2 more than its worst-case distortion. This implies novel bounds on the fairness ratio of several social choice rules, including STV, for which only bounds on distortion were previously known \cite{skowron2017social}.

\begin{result}[Corollary~\ref{cor:rules}]
The fairness ratio of STV is $O(\log m)$ and $\Omega (\sqrt{\log m})$.
\end{result}

\subsubsection{Distortion and Fairness for Multiple-Winner Rules}
We also consider the case where a social choice rule selects a set of winners (of a given size). Here, the cost of an agent for any set of alternatives is the sum of her costs over the alternatives in the set, and the social cost is the sum of the individual agent costs, as before.

It is straightforward to see from previous results \cite{anshelevich2015approximating} that choosing the desired set by a repeated application of the Copeland rule achieves a worst-case distortion of 5. In fact, we show that recursive application of any single-winner rule -- where a set of given size $\ell$ is selected by first choosing the winner when the rule is applied over the entire set of $m$ alternatives, then over the remaining $m-1$ alternatives, and so on until we have chosen $\ell$ alternatives -- results in an $\ell$-winner rule with the same distortion bound.

\begin{result}[Theorem~\ref{thm:distsets}]
Recursive application of a single-winner social choice rule gives an $\ell$-winner rule (for any $\ell$) achieving the same bound on distortion.
\end{result}

Although this result does not directly extend to the fairness ratio, we do establish that our main result relating distortion and fairness \emph{does} apply to multiple-winner rules. Combining this and the above result concerning recursive social choice rules, we show that it is possible to achieve, using Recursive Copeland, a constant-factor bound on the multiple-winner fairness ratio.

\begin{result}[Corollary~\ref{cor:recursive}]
Recursive Copeland achieves a worst-case distortion of at most 5, and a worst-case fairness ratio of at most 7.
\end{result}

However, note that we do not know of a better lower bound than 5 for the worst-case fairness ratio of Recursive Copeland. The lower bound of 5 follows from previous work -- the upper bound of 5 for both the distortion and fairness ratio of Copeland is known to be tight \cite{anshelevich2015approximating,goel2017metric}. 

\subsection{Related Literature}
Within the general model of agents with cardinal preferences who report only ordinal information, several ways of bounding distortion have been studied in the literature \cite{moulin2016handbook}. It is usually assumed that agents' ordinal rankings straightforwardly match the order of their cardinal utilities, with the most-preferred alternative ranked first. In this case, it is known that when the underlying utilities are unrestricted, or simply normalized, the worst-case distortion of any deterministic social choice rule is large \cite{procaccia2006distortion}. With randomized mechanisms, it is possible to achieve an expected distortion of $O(\sqrt{m} \log^* m)$, where $m$ is the number of alternatives \cite{boutilier2015optimal}. However, if the mechanism has control over how agents' utilities are translated into rankings, it is possible to design randomized rules with very low distortion \cite{caragiannis2011voting}. The preceding results are not restricted to strategy-proof voting rules, but it is possible to construct a truthful-in-expectation mechanism whose worst-case distortion is $O(m^{3/4})$ \cite{filos2014truthful}.

One way to improve these distortion results is to make spatial assumptions on the underlying cardinal preferences, a technique which has a long history in social choice \cite{enelow1984spatial,moulin1980strategy}. Such models, especially those using Euclidean spaces, have naturally also been studied in the approximation algorithms literature on facility location problems \cite{arya2004local,drezner1995facility}. In these models, the cost of an agent for an alternative is given by the distance between the two.

As mentioned earlier, our work follows the literature on the analysis of distortion of social choice rules under the assumption that agent costs form an unknown metric space \cite{anshelevich2015approximating,anshelevich2016randomized}. Several lower and upper bounds for the sum of costs and median objectives are known, in both deterministic and randomized settings \cite{anshelevich2015approximating,anshelevich2016randomized}. For example, in the deterministic case, Copeland achieves a distortion of at most 5 \cite{anshelevich2015approximating}, and the distortion of STV is $O(\log m)$ \cite{skowron2017social}. In the special case of Euclidean metrics, it possible to design low-distortion mechanisms, with the additional constraint of their being truthful-in-expectation \cite{feldman2016voting}.

Social choice rules choosing sets of alternatives rather than a single winner have been studied for some time \cite{chamberlin1983representative,monroe1995fully,faliszewski2017multiwinner}, but recently they have also been evaluated within the distortion framework \cite{caragiannis2017subset}. In most of these settings, the value of a set for an agent is determined by her favorite item in the set. Such rules have been evaluated both using distortion of utilities and an additive notion of approximation, regret; the distortion in this setting remains large -- $\Theta(\sqrt{m})$ -- at least when the number of winners is small compared to the number of alternatives \cite{caragiannis2017subset}. To the best of our knowledge, multiple-winner rules have not been studied under the metric costs model.
 
In the distortion framework, both the interpersonal comparison of utilities, and the goal of utility maximization, are implicitly assumed to be valid. While the interpersonal comparison of utilities is more meaningful in some contexts than others \cite{boutilier2015optimal}, we take it for granted. 
 
In addition to distortion, we borrow from the various notions of fairness studied in the context of network problems such as bandwidth allocation and load balancing \cite{kleinberg1999fairness,kumar2000fairness}, of which approximate majorization is the most general \cite{goel2001approximate}. The notion of approximate majorization has been used in the metric distortion setting to study the fairness properties of social choice rules in the form of the fairness ratio \cite{goel2017metric}. Copeland is known to achieve a fairness ratio of 5 \cite{goel2017metric}, and a bound on the fairness ratio translates to a bound on social cost objectives belonging to a general class of convex functions \cite{goel2006simultaneous,goel2017metric}.

\section{Preliminaries}
\subsection{Social Choice Rules} Let $\V$ be the set of agents and $\cc$ the set of alternatives. We will use $N$ to denote the total number of agents, i.e., $N = |\V|$, and $m$ the number of alternatives, i.e., $m = |\cc|$. Every agent $v \in \V$ has a strict (no ties) preference ordering $\sigma_v$ on $\cc$. For any $c,\cpr \in \cc$, we will use $c \succ_v \cpr$ to denote the fact that agent $v \in \V$ \emph{prefers} $c$ over $\cpr$ in her ordering $\sigma_v$. Let $\s$ be the set of all possible preference orderings on $\cc$. We call a profile of preference orderings $\sigma \in \s^N$ an \emph{instance}. 

Based on the preferences of the agents, we either want to determine a single alternative as the winner, or a set of alternatives of a given size $\ell$.  A (deterministic) single-winner social choice rule is a function $f: \s^N \to \cc$ that maps each instance to an alternative. An $\ell$-winner social choice rule is a function $f: \s^N \to \{S \subseteq \cc : |S|= \ell\}$ that maps each instance to a set of $\ell$ alternatives.

To define the social choice rules that we use in this paper, we need one additional definition. We say that an alternative $c$ \emph{pairwise beats} $\cpr$ if $|\{v \in \V : c \succ_v \cpr\}| \geq \frac{N}{2}$, with ties broken arbitrarily. 

\begin{itemize}
\item \textbf{Copeland}: Given an instance $\sigma$, define a score for each alternative $c \in \cc$ given by $|\{c^\prime \in \cc : c \mbox{ pairwise beats } \cpr\}|$. The alternative with the highest score (the largest number of pairwise victories, with ties broken arbitrarily) is chosen as the winner.
\item \textbf{Recursive rules}: For any single-winner rule $f$, we define the $\ell$-winner ``Recursive $f$'' (e.g., Recursive Copeland). Given an instance $\sigma$, choose $\ell$ winners as follows: First pick the winner $c_1$ under rule $f$ among all alternatives $\cc$, then pick the winner $c_2$ under $f$ among $\cc \setminus \{c_1\}$, and so on until $c_\ell$. The set of winners is given by $\{c_1, c_2, \ldots, c_\ell\}$.
\item \textbf{STV}: Given an instance $\sigma$, repeatedly eliminate the alternative ranked first by the fewest agents and remove this alternative from every ranking. The last remaining alternative is the winner.
\end{itemize}

\subsection{Metric costs} We assume that the agent costs over the alternatives are given by an underlying metric $d$ on $\cc \cup \V$, such that $d(v,c)$ is the cost of an agent $v$ for an alternative $c$.
\begin{definition} \label{def:metric}
A function $d : \cc \cup \V \times \cc \cup \V \to \Rplus$ is a metric if and only if $\forall x,y,z \in \cc \cup \V$, we have the following:
\begin{enumerate}
\item $d(x,y) \geq 0$
\item $d(x,x) = 0$
\item $d(x,y) = d(y,x)$
\item $d(x,z) \leq d(x,y) + d(y,z)$\label{eqn:tri-ineq}
\end{enumerate}
\end{definition}

We can do with a much simpler yet equivalent assumption on the agents' costs. We need to first define a q-metric (``q'' for quadrilateral) by replacing the triangle inequalities by ``quadrilateral'' inequalities (Condition \ref{eqn:quad-ineq} in the definition below).
\begin{definition} \label{def:qmetric}
A function $d : \V \times \cc \to \Rplus$ is a q-metric if and only if $\forall v,\vpr \in \V$, and $\forall c, \cpr \in \cc$, we have the following:
\begin{enumerate}
\item $d(v,c) \geq 0$
\item $d(v,c) \leq d(v,\cpr) + d(\vpr,\cpr) + d(\vpr,c)$ \label{eqn:quad-ineq}
\end{enumerate}
\end{definition}
The following equivalence has been shown in earlier work \cite{goel2017metric}.
\begin{lemma}[Goel et al. \cite{goel2017metric}]\label{thm:tri-eq-quad}
If $d$ is a q-metric, then there exists a metric $\dpr$ such that $d(v,c) = \dpr(v,c)$ for all $v \in \V$ and $c \in \cc$.
\end{lemma}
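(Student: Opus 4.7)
The plan is to explicitly extend $d$ to a function $\dpr$ on all of $(\V \cup \cc) \times (\V \cup \cc)$ and then verify that $\dpr$ satisfies the four metric axioms. For cross pairs, set $\dpr(v,c) = d(v,c)$, which immediately gives the required agreement with $d$. For same-type pairs, use a two-hop shortest path through the opposite side: for distinct $v, \vpr \in \V$, let $\dpr(v, \vpr) = \min_{c \in \cc}[d(v,c) + d(\vpr, c)]$; for distinct $c, \cpr \in \cc$, let $\dpr(c, \cpr) = \min_{v \in \V}[d(v,c) + d(v,\cpr)]$; and set $\dpr(x,x) = 0$ for every $x$.

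Non-negativity, the zero-diagonal condition, and symmetry are immediate from the construction. The only axiom requiring work is the triangle inequality $\dpr(x,z) \le \dpr(x,y) + \dpr(y,z)$, which we verify by cases on which side of $\V \cup \cc$ each of $x, y, z$ lies. The ``mixed'' cases (e.g., $\dpr(v_1, c_1) \le \dpr(v_1, v_2) + \dpr(v_2, c_1)$) reduce immediately to the quadrilateral inequality of Definition~\ref{def:qmetric} after unfolding the minimum defining the same-type term. The cases where $y$ lies on the opposite side from $x, z$ (e.g., $\dpr(c_1, c_2) \le \dpr(c_1, v) + \dpr(v, c_2)$) follow directly from the definition of $\dpr$ as a minimum.

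The main obstacle is the ``same-side endpoints'' cases, where all three points come from $\V$ (or all three from $\cc$). The plan for $\dpr(v_1, v_2) \le \dpr(v_1, v_3) + \dpr(v_3, v_2)$ is to let $c_1$ and $c_2$ achieve the minima defining $\dpr(v_1, v_3)$ and $\dpr(v_3, v_2)$ respectively, then apply the quadrilateral inequality to the tuple $(v_1, c_1, v_3, c_2)$ to obtain $d(v_1, c_2) \le d(v_1, c_1) + d(v_3, c_1) + d(v_3, c_2)$; adding $d(v_2, c_2)$ to both sides and using the bound $\dpr(v_1, v_2) \le d(v_1, c_2) + d(v_2, c_2)$ closes the inequality. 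The all-alternatives case is symmetric, swapping the roles of $\V$ and $\cc$. Degenerate cases where endpoints coincide are handled by the zero-diagonal convention. Once all cases are dispatched, $\dpr$ is a metric on $\V \cup \cc$ with $\dpr(v,c) = d(v,c)$, as required.
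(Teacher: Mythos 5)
The paper itself does not prove this lemma; it is imported verbatim from Goel et al.~\cite{goel2017metric}, so there is no in-paper argument to compare against, and your proposal supplies a self-contained proof of what the paper only cites. Your construction is correct and is essentially the standard extension argument: keep $d$ on agent--alternative pairs, define same-side distances as the two-hop minimum through the opposite side, and the case analysis you outline does go through -- the opposite-side-middle cases hold by definition of the minimum, the mixed cases unfold to exactly the quadrilateral inequality of Definition~\ref{def:qmetric}, and the all-agents (and symmetric all-alternatives) case follows from the quadrilateral inequality applied to the two minimizers as you describe. One point worth making explicit when writing it up: your $\dpr$ may assign distance $0$ to two distinct agents (e.g., if they share a common alternative at distance $0$), but this is harmless because Definition~\ref{def:metric} does not require distinct points to be at positive distance.
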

Henceforth, we will use the terms \emph{metric} and \emph{q-metric} interchangeably.

\subsection{Distortion}
We say that a metric $d$ is \emph{consistent} with an instance $\sigma$, if whenever any agent $v$ prefers $c$ over $\cpr$, then her cost for $c$ must be at most her cost for $\cpr$, i.e., $c \succ_v \cpr \implies d(v,c) \leq d(v,\cpr)$. We denote by $\rho(\sigma)$ the set of all metrics $d$ that are consistent with $\sigma$.

The social cost, $\phi$, of an alternative is defined as the sum of the agent costs for it. For any metric $d$, and any alternative $c \in \cc$, we define $\phi(c,d) = \sum_{v \in \V} d(v,c)$. The social cost of a set of alternatives is the sum of social costs of the constituent alternatives. For any set of alternatives $S \subseteq \cc$, we define
\[\phi(S,d) = \sum_{v \in \V} \sum_{c \in S} d(v,c).\]
Below, we will define distortion in terms of choosing sets of alternatives. The corresponding definitions for single-winner social choice rules are obtained by using singleton sets instead.

We view social choice rules as trying to approximate the optimal set of alternatives of a given size $\ell$, with knowledge of only the rankings $\sigma$, but not the underlying metric cost $d$ that induces $\sigma$. To measure how close a social choice rule gets to the optimal set of size $\ell$ in terms of social cost, we define the \emph{distortion} of any given set $S$ of size $\ell$ to be the ratio of the social cost of $S$ to the cost of the optimal set according to $d$:
\begin{align*}
\dist(S,d) = \frac{\phi(S,d)}{\min_{T \subseteq \cc: |T| = \ell} \phi(T,d)}.
\end{align*}
The worst-case distortion of a set alternatives $S \subseteq \cc$ for a given instance $\sigma$ is given by
\begin{align*}
\dist(S,\sigma) = \sup_{d \in \rho(\sigma)} \dist(S,d).
\end{align*}
The distortion of a social choice rule $f$ is said to be at most $\beta$, if for any instance $\sigma$, we have $\dist(f(\sigma),\sigma) \leq \beta$. In other words, the worst-case distortion of the set of alternatives chosen by $f$, over all possible instances $\sigma$, and all metrics $d \in \rho(\sigma)$ that are consistent with it, is at most $\beta$. 

\subsection{Fairness} \label{subsec:prelims-fairness}
Given an underlying metric, based on the alternative chosen, the costs incurred might vary widely among the agents. We want to formally quantify how ``fair'' choosing a particular alternative is. For this purpose, we look at social cost defined as the sum of $k$ largest agent costs, for all $1 \leq k \leq N$. For any metric $d$ and $S \subseteq \cc$, we define
\begin{align*}
\phi_k(S,d) = \max_{V \subseteq \V : |V| = k} \sum_{v \in V} \sum_{c \in S} d(v,c).
\end{align*}

We measure fairness by a worst-case bound on how well a social choice rule approximates (simultaneously for all $k$) the optimal set of alternatives in terms of the social cost given by $\phi_k$, with knowledge of only the rankings $\sigma$, but not the underlying metric $d$. To this end, we define the \emph{fairness ratio} of a given set of alternatives $S$ of size $\ell$ as
\begin{align*}
\fratio(S,d) =  \max_{1 \leq k \leq N} \frac{\phi_k(S,d)}{\min_{T \subseteq \cc : |T| = \ell} \phi_k(T,d)}.
\end{align*}
The worst-case fairness ratio of a set of alternatives $S$ of size $\ell$, for a given instance $\sigma$ then becomes
\begin{align*}
\fratio(S,\sigma) = \sup_{d \in \rho(\sigma)} \fratio(S,d).
\end{align*}
Furthermore, the fairness ratio of a social choice rule $f$ is said to be at most $\beta$, if for any instance $\sigma$, we have $\fratio(f(\sigma),\sigma) \leq \beta$.

Another reason for studying the fairness ratio is that for deterministic social choice rules, a bound on the fairness ratio translates to an approximation result with respect to a broad class of convex cost functions (see Goel et al. \cite{goel2017metric})

\section{Relating Distortion and Fairness}
In this section, we establish our main result, which ties the distortion and fairness ratio closely together (within an additive factor of 2). This result additionally implies bounds on the fairness ratio for several popular social choice rules for which only bounds on the distortion were previously known, including $k$-Approval, Veto, Plurality, Borda, and STV. Until Sect.~\ref{sec:sets}, we consider only single-winner social choice rules $f:\s^N \to \cc$.

We will prove that the distortion-fairness gap is at most 2 for any fixed metric, which in turn implies the gap is at most 2 for any instance (taking the supremum over all metrics) \emph{and} for any rule (taking the supremum over all instances). Perhaps surprisingly, although this relationship is established on the level of a specific metric, we will show that it is tight on the level of instances, i.e., there exist instances for which the distortion-fairness gap of a given alternative approaches 2.

First, we note the following trivial relationship between the distortion and fairness ratio.
\begin{theorem}\label{thm:leq}
For any instance $\sigma$ and alternative $c$,
\[\dist(c,\sigma) \leq \fratio(c,\sigma).\]
\end{theorem}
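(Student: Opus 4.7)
The plan is to observe that the theorem follows from a direct comparison at the level of a fixed metric, and the inequality lifts through the supremum definition. The key identity is that $\phi_N(c,d)$, the sum of the largest $N$ agent costs for $c$ under metric $d$, equals $\phi(c,d)$: the only subset of $\V$ of size $N$ is $\V$ itself, so both expressions are $\sum_{v \in \V} d(v,c)$. The same identity holds for any other alternative $c'$, so in particular for the minimizer.

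Next, I would fix an arbitrary metric $d \in \rho(\sigma)$ and specialize the maximum in the definition of $\fratio(c,d)$ to the single index $k = N$. Using the identity above,
\[\fratio(c,d) \;=\; \max_{1 \leq k \leq N} \frac{\phi_k(c,d)}{\min_{c' \in \cc} \phi_k(c',d)} \;\geq\; \frac{\phi_N(c,d)}{\min_{c' \in \cc} \phi_N(c',d)} \;=\; \frac{\phi(c,d)}{\min_{c' \in \cc} \phi(c',d)} \;=\; \dist(c,d).\]
Since this inequality holds pointwise in $d$, taking the supremum over all $d \in \rho(\sigma)$ on both sides preserves the direction of the inequality, giving $\dist(c,\sigma) \leq \fratio(c,\sigma)$.

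There is no real obstacle here: the argument reduces to unpacking definitions and noticing that the distortion objective is literally one of the objectives competing in the maximum that defines the fairness ratio. The only thing worth being careful about is that the $\min$ in the denominator is over the same set $\cc$ in both ratios, so the ratio at $k = N$ really is the distortion ratio and not some variant over a restricted set of competitors.
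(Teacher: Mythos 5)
Your argument is correct and matches the paper's own proof essentially verbatim: both fix a metric $d$, use $\phi_N(c,d) = \phi(c,d)$ to identify the distortion ratio as the $k=N$ term in the maximum defining $\fratio(c,d)$, and then pass to the supremum over $d \in \rho(\sigma)$. No issues.
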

Theorem~\ref{thm:leq} also implies that if the fairness ratio of a single-winner rule $f$ is at most $\beta$, then the distortion of $f$ is at most $\beta$.

Next, we establish a corresponding lower bound on distortion, giving the desired gap of 2. Again, we establish the bound for fixed instances by proving it for individual metrics, and this also implies that the same bound applies to any single-winner rule $f$.
\begin{theorem}\label{thm:2}
For any instance $\sigma$ and any alternative $c$,
\[\fratio(c,\sigma) - 2 < \dist(c,\sigma).\]
\end{theorem}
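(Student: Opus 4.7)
The plan is to prove the per-metric inequality $\fratio(c,d) \leq \dist(c,d) + 2$ for every $d \in \rho(\sigma)$, which together with $\dist(c,d) \leq \dist(c,\sigma)$ and taking suprema gives the desired bound; the strict inequality will come from a careful tracking of slack in the q-metric inequalities.

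Fix $d \in \rho(\sigma)$ and an integer $k$ with $1 \leq k \leq N$. Let $V^{\star} \subseteq \V$ be a set of $k$ agents attaining $\phi_k(c,d) = \sum_{v \in V^\star} d(v,c)$, and let $\cpr^\star \in \argmin_{\cpr} \phi_k(\cpr,d)$. The heart of the argument is to apply the q-metric (quadrilateral) inequality
\[
d(v,c) \;\leq\; d(v,\cpr^\star) + d(\vpr,\cpr^\star) + d(\vpr,c)
\]
for each $v \in V^\star$, averaged over all $\vpr \in \V$. Averaging eliminates the arbitrary "witness" agent $\vpr$ and yields
\[
d(v,c) \;\leq\; d(v,\cpr^\star) + \tfrac{1}{N}\phi(\cpr^\star,d) + \tfrac{1}{N}\phi(c,d).
\]
Summing this over the $k$ agents in $V^\star$ and using $\sum_{v \in V^\star} d(v,\cpr^\star) \leq \phi_k(\cpr^\star,d)$ (a valid $k$-subset in the maximum defining $\phi_k$), I obtain
\[
\phi_k(c,d) \;\leq\; \phi_k(\cpr^\star,d) + \tfrac{k}{N}\phi(\cpr^\star,d) + \tfrac{k}{N}\phi(c,d).
\]

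The remaining step is to convert the two $\phi(\cdot,d)$ terms into multiples of $\phi_k(\cpr^\star,d)$. For the first, since $\phi_k$ sums the $k$ largest of $N$ nonnegative summands, $\phi(\cpr^\star,d) \leq \tfrac{N}{k}\phi_k(\cpr^\star,d)$, so $\tfrac{k}{N}\phi(\cpr^\star,d) \leq \phi_k(\cpr^\star,d)$. For the second, I invoke the distortion: letting $\opt \in \argmin_{\cpr}\phi(\cpr,d)$, I have $\phi(c,d) \leq \dist(c,d)\,\phi(\opt,d) \leq \dist(c,d)\,\phi(\cpr^\star,d)$, and again $\phi(\cpr^\star,d) \leq \tfrac{N}{k}\phi_k(\cpr^\star,d)$. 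Substituting gives
\[
\phi_k(c,d) \;\leq\; \bigl(2 + \dist(c,d)\bigr)\,\phi_k(\cpr^\star,d).
\]
Dividing by $\phi_k(\cpr^\star,d)$ (a positive quantity, since the opposite case forces $\dist(c,d)$ to be infinite or the ratio to be trivially $\leq 1$), taking the maximum over $k$, and applying $\dist(c,d) \leq \dist(c,\sigma)$ yields $\fratio(c,d) \leq \dist(c,\sigma) + 2$.

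The main obstacle is upgrading the $\leq$ in the theorem to the strict $<$ after taking the supremum over $d$. The slack I plan to exploit is the q-metric inequality applied at $\vpr = v$ itself, which reduces to $0 \leq 2d(v,\cpr^\star)$; whenever some $v \in V^\star$ has $d(v,\cpr^\star) > 0$ the averaged inequality is strict. The boundary case where $d(v,\cpr^\star)=0$ for every $v \in V^\star$ forces $\sum_{v \in V^\star} d(v,\cpr^\star) = 0 < \phi_k(\cpr^\star,d)$, again producing a strict inequality (otherwise $\phi_k(\cpr^\star,d)=0$ and the ratio is degenerate, handled separately by a distortion argument). Finally, taking $\sup_d$ on the level of the fixed instance $\sigma$ preserves strictness because any sequence $d_n \in \rho(\sigma)$ whose fairness ratio approaches $\dist(c,\sigma)+2$ can be normalized to a compact family of consistent metrics, so the supremum is attained and then the per-metric strict bound applies. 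This justifies $\fratio(c,\sigma) - 2 < \dist(c,\sigma)$.
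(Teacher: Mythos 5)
Your per-metric derivation of the non-strict bound is correct, and it is a genuinely different route from the paper's: you upper-bound the fairness numerator $\phi_k(c,d)$ directly, by averaging the quadrilateral inequality $d(v,c) \leq d(v,\cpr^\star) + d(\vpr,\cpr^\star) + d(\vpr,c)$ over all witnesses $\vpr \in \V$ and then invoking the definition of distortion to control $\phi(c,d)$; the paper instead lower-bounds the distortion ratio $\phi(c,d)/\phi(\opt,d)$ by propagating the quadrilateral inequality from the far set $V_k$ to each agent outside it, which forces a case split at $k \leq N/2$ versus $k \geq N/2$. Your version avoids that case analysis and is arguably cleaner -- but it pays for this by producing the constant exactly $2$, whereas the paper's argument yields $\fratio(c,d) \leq \dist(c,d) + 2\tfrac{N-1}{N}$, uniformly bounded away from $2$ for fixed $N$.

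That difference is exactly where your proposal has a genuine gap: the theorem asserts a strict inequality, and strictness of a per-metric bound does not survive a supremum over metrics. Your fix -- normalize the metrics to a compact family, claim the supremum $\fratio(c,\sigma)$ is attained at some $d^\star$, then apply per-metric strictness at $d^\star$ -- is not justified as stated. The functionals $d \mapsto \phi_k(c,d)/\min_{\cpr \in \cc}\phi_k(\cpr,d)$ are not continuous on the normalized compact set, because the denominator can vanish in the limit (e.g., a sequence in which all distances to $c$ and to the minimizing alternative shrink while some irrelevant distance stays at the normalization scale), so attainment cannot simply be asserted; and the degenerate cases you defer (``handled separately by a distortion argument'', $\phi_k(\cpr^\star,d)=0$, unbounded ratios) are precisely where the argument would have to do real work. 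The good news is that your own derivation admits a small patch that makes all of this unnecessary: in the averaging step, apply the quadrilateral inequality only to the $N-1$ witnesses $\vpr \neq v$ and use the trivial bound $d(v,c) \leq d(v,c)$ for $\vpr = v$. This gives $d(v,c) \leq \tfrac{N-2}{N} d(v,\cpr^\star) + \tfrac{1}{N}\phi(\cpr^\star,d) + \tfrac{1}{N}\phi(c,d)$, hence after summing over $V^\star$ the bound $\fratio(c,d) \leq \dist(c,d) + 2\tfrac{N-1}{N}$ (with $N=1$ handled trivially since then $k=N$ and fairness equals distortion), which is the paper's per-metric inequality; the strict inequality after taking $\sup_d$ then follows immediately from $2\tfrac{N-1}{N} < 2$, with no compactness or attainment argument needed.
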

Theorem~\ref{thm:2} also implies that if the distortion of a single-winner rule $f$ is at most $\beta$, then the fairness ratio of $f$ is at most $\beta + 2$.

Additionally, the bound of Theorem~\ref{thm:leq} is tight, in the sense that there exists a series of instances and an alternative $c$ for which the distortion-fairness gap approaches 2 as the number of agents approaches $\infty$. One such series of instances is illustrated in Fig.~\ref{fig:tight}.

\begin{figure}
\begin{center}
 \begin{tikzpicture}
	\tikzstyle{dot} = [draw, fill, circle, inner sep=1.5pt];
	
	\node (21) at (0,0) [dot] {};
	\node at (21) [below=5pt] {$v$};
	\node (2) at (2,0) [dot] {};
	\node at (2) [below=5pt] {$c_2$};
	\node (12) at (4,0) [dot] {};
	\node at (12) [below=5pt] {$\V \setminus v$};
	\node (1) at (6,0) [dot] {};
	\node at (1) [below=5pt] {$c_1$};
	
	\draw (21) -- (2) node [midway,above] {$1$};
	\draw (2) -- (12) node [midway,above] {$1$};
	\draw (12) -- (1) node [midway,above] {$1 - \delta$};
\end{tikzpicture}
\end{center}
\caption{A graph metric inducing a series of instances $\sigma_N$ for any $N = | \V |$, where $c_2 \succ_v c_1$ and $c_1 \succ_{v'} c_2$ for all $v' \neq v$. This metric shows that $\fratio(c_1,\sigma_N) \geq 3$ (letting $\delta \to 0$); however $\dist(c_1,\sigma_N) \to 1$ as $N \to \infty$.}\label{fig:tight}
\end{figure}
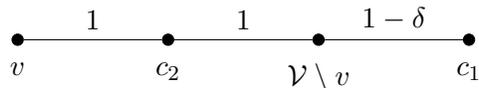

\begin{theorem}\label{thm:tight}
For the series of instances $\sigma_N$ illustrated in Fig.~\ref{fig:tight}, $\fratio(c_1,\sigma_N) - \dist(c_1,\sigma_N) \to 2$ as $N \to \infty$.
\end{theorem}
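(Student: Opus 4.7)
The plan is to sandwich $\fratio(c_1,\sigma_N) - \dist(c_1,\sigma_N)$ between a quantity tending to $2$ from below and the upper bound of $2$ from Theorem~\ref{thm:2}. So I need a matching lower bound on the gap, which I will obtain by separately showing $\fratio(c_1,\sigma_N) \geq 3 - o(1)$ and $\dist(c_1,\sigma_N) \leq 1 + o(1)$ as $N\to\infty$. Note that in $\sigma_N$ there are only two alternatives, so distortion and fairness ratio just compare $c_1$ against $c_2$.

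For the fairness lower bound, I would plug in the metric $d$ drawn in Fig.~\ref{fig:tight} itself, which is consistent with $\sigma_N$ for every $\delta\in(0,1)$. Taking $k=1$, the largest cost under $c_1$ comes from $v$ and equals $3-\delta$, while the largest cost under $c_2$ is $1$. Since $\min_{c}\phi_1(c,d) = 1$, this gives $\fratio(c_1,d) \geq 3-\delta$, and since $\delta$ is arbitrary, $\fratio(c_1,\sigma_N) \geq 3$.

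For the distortion upper bound, I would fix an arbitrary $d\in\rho(\sigma_N)$ and bound $\phi(c_1,d)/\phi(c_2,d)$ uniformly in $d$. The key identity is
\[\phi(c_1,d) - \phi(c_2,d) = \bigl[d(v,c_1) - d(v,c_2)\bigr] + \sum_{v'\neq v}\bigl[d(v',c_1) - d(v',c_2)\bigr],\]
whose second term is nonpositive since every $v'\neq v$ ranks $c_1$ above $c_2$. Thus $\phi(c_1,d) - \phi(c_2,d) \leq d(v,c_1) - d(v,c_2)$. The main calculation is then to control $d(v,c_1)$ by $\phi(c_2,d)$; I would apply the quadrilateral inequality $d(v,c_1) \leq d(v,c_2) + d(v',c_2) + d(v',c_1)$ and use consistency $d(v',c_1)\leq d(v',c_2)$ to get $d(v,c_1) \leq d(v,c_2) + 2\,d(v',c_2)$. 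Averaging this bound over all $v'\neq v$ yields $d(v,c_1) - d(v,c_2) \leq \tfrac{2}{N-1}\sum_{v'\neq v} d(v',c_2) \leq \tfrac{2}{N-1}\phi(c_2,d)$. Hence $\phi(c_1,d)/\phi(c_2,d) \leq 1 + \tfrac{2}{N-1}$, so $\dist(c_1,d) \leq 1 + \tfrac{2}{N-1}$ holds uniformly over $d\in\rho(\sigma_N)$, giving $\dist(c_1,\sigma_N) \leq 1 + \tfrac{2}{N-1}$.

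Combining, $\fratio(c_1,\sigma_N) - \dist(c_1,\sigma_N) \geq 3 - \bigl(1 + \tfrac{2}{N-1}\bigr) = 2 - \tfrac{2}{N-1}$, while Theorem~\ref{thm:2} forces the same gap to be strictly less than $2$. Letting $N\to\infty$ yields the claim. The main obstacle is the distortion upper bound; the fairness side is immediate from the displayed metric, but bounding distortion over \emph{all} consistent metrics requires the quadrilateral-inequality averaging trick described above, which is where the factor $1/(N-1)$ that drives the asymptotics comes from.
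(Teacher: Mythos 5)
Your proposal is correct and takes essentially the same route as the paper: exhibit the figure's metric (letting $\delta \to 0$) to get $\fratio(c_1,\sigma_N) \geq 3$, show the distortion of $c_1$ tends to $1$, and cap the gap at $2$ via Theorem~\ref{thm:2}. The paper's proof merely asserts the distortion limit, whereas your quadrilateral-inequality averaging bound $\dist(c_1,\sigma_N) \leq 1 + \tfrac{2}{N-1}$ supplies the detail it leaves implicit.
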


This example also implies that choosing the alternative with lowest distortion (in this case, $c_1$) can result in a distortion-fairness gap approaching $2$. Indeed, any reasonable social choice rule should choose $c_1$, which beats $c_2$ by an overwhelming majority, and thus would have a distortion-fairness gap approaching 2 for this instance.

In fact, this gap makes sense when we note that $\fratio(c,\sigma)$ is always at least 3 unless $c$ is ranked first by \emph{every} agent. Otherwise, let $v$ be an agent that prefers some alternative $c_\mathrm{adv}$ above $c$, i.e., $c_\mathrm{adv} \succ_v c$, and consider a metric $d$ such that
\begin{equation}\label{eq:d}
d(v',c') = \begin{cases}
3 &\text{if } v'=v, c_\mathrm{adv} \succ_v c' \\
1 &\text{otherwise}
\end{cases}.
\end{equation}
It is not hard to see that this metric satisfies the quadrilateral inequality and is consistent with $\sigma$. For this metric and $k=1$, we have
\[\frac{ \phi_1(c,d) }{ \min_{c' \in \cc} \phi_1(c',d) } = \frac{3}{1} = 3.\]
Since this consistent metric achieves a fairness ratio of 3, we conclude that $\fratio(c,\sigma) \geq 3$ unless every agent ranks $c$ first. Since distortion can approach 1 without this strict requirement, a gap of 2 naturally arises. (However, note that this flooring effect is not the only reason for a distortion-fairness gap; such a gap may still exist when the fairness ratio is larger than 3.)

\subsection{New Bounds on the Fairness of Specific Rules}
Now that we have established a close relationship between the distortion and fairness ratio -- within a constant additive factor -- we immediately get new results for any single-winner rule where distortion is known but fairness is not. These include many rules in common use: both simple rules such as $k$-Approval, Veto, Plurality, and Borda \cite{anshelevich2015approximating}; and more complex ones such as STV and the general class of scoring rules, i.e., rules where each position on an agent's ballot is worth a certain number of points and the alternative with the most points is the winner \cite{skowron2017social}. The following corollary arises directly from these known bounds and Theorems~\ref{thm:leq} and \ref{thm:2} establishing the relationship between the distortion and fairness ratio.

\begin{corollary}\label{cor:rules}
For an instance $\sigma$ with $N$ agents and $m$ alternatives,
\begin{enumerate}
\item The fairness ratio of $k$-Approval and Veto is $\Theta(N)$.
\item The fairness ratio of Plurality and Borda is $\Theta(m)$.
\item The fairness ratio of the harmonic scoring rule \footnote{see Skowron and Elkind \cite{skowron2017social} for a definition.} is $O(\frac{m}{\sqrt{\log m}})$ and $\Omega(\frac{m}{\log m})$.
\item The fairness ratio of \emph{any} scoring rule is $\Omega(\sqrt{\log m})$.
\item The fairness ratio of STV is $O(\log m)$ and $\Omega(\sqrt{\log m})$.
\end{enumerate}
\end{corollary}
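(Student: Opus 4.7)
The plan is to observe that Corollary~\ref{cor:rules} is an essentially mechanical consequence of Theorems~\ref{thm:leq} and \ref{thm:2} combined with already-known distortion bounds for the listed rules. Theorem~\ref{thm:2} implies that if a single-winner rule $f$ has worst-case distortion at most $\beta$, then its worst-case fairness ratio is at most $\beta+2$; Theorem~\ref{thm:leq} implies that its worst-case fairness ratio is at least its worst-case distortion. Hence every distortion upper bound transfers to a fairness-ratio upper bound up to an additive constant, and every distortion lower bound transfers directly to a fairness-ratio lower bound. Since all five items of the corollary are stated asymptotically in $N$ or $m$, the additive $2$ is absorbed and the bounds match.

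Next I would enumerate the five items and cite the corresponding distortion bounds. For items~1 and~2, I would invoke the $\Theta(N)$ distortion bounds for $k$-Approval and Veto, and the $\Theta(m)$ distortion bounds for Plurality and Borda, established by Anshelevich et al.~\cite{anshelevich2015approximating}. Each matching upper and lower bound on distortion yields the matching upper and lower bound on the fairness ratio via Theorems~\ref{thm:2} and~\ref{thm:leq} respectively. For items~3, 4, and~5, I would cite the distortion bounds proved by Skowron and Elkind~\cite{skowron2017social}: the $O(m/\sqrt{\log m})$ upper bound and $\Omega(m/\log m)$ lower bound on the distortion of the harmonic scoring rule; the $\Omega(\sqrt{\log m})$ lower bound for every scoring rule; and the $O(\log m)$ upper bound together with the $\Omega(\sqrt{\log m})$ lower bound on the distortion of STV. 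Applying Theorem~\ref{thm:2} to each distortion upper bound (with the $+2$ absorbed into the asymptotic notation) and Theorem~\ref{thm:leq} to each distortion lower bound yields precisely the stated fairness-ratio bounds.

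There is no real technical obstacle: the theorems already proved do all the work, and the proof is essentially a table lookup of existing distortion results paired with two one-line appeals to Theorems~\ref{thm:leq} and~\ref{thm:2}. The only point that merits a brief comment is the absorption of the additive constant: for all five items the asymptotic rate is $\omega(1)$ (in $N$ or $m$), so adding $2$ to an upper bound preserves the $\Theta$, $O$, or $\Omega$ statement verbatim. I would close by noting explicitly that although Theorems~\ref{thm:leq} and~\ref{thm:2} were stated for individual alternatives under a fixed instance, taking the supremum over instances immediately lifts them to rule-level statements, which is exactly what is needed here.
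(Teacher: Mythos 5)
Your proposal is correct and follows essentially the same route as the paper, which gives no separate proof of Corollary~\ref{cor:rules} beyond noting that it ``arises directly'' from the known distortion bounds of Anshelevich et al.\ and Skowron--Elkind together with Theorems~\ref{thm:leq} and~\ref{thm:2}. Your added remarks on absorbing the additive $2$ into the asymptotics and on lifting the per-instance statements to rule-level statements are exactly the (implicit) steps the paper relies on.
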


\subsection{Calculating Fairness}
For completeness, we can give a straightforward way to calculate the fairness ratio exactly, similar to the program given by Goel et al. \cite{goel2017metric} for calculating distortion. Although we do not have a polynomial time algorithm for calculating the fairness ratio, it can be computed by means of a binary linear program. We provide details in Section \ref{app:calc_fairness} in the Appendix.

\section{Distortion and Fairness for Multiple-Winner Rules}\label{sec:sets}
The notions of distortion and fairness ratio extend naturally to choosing a \emph{set} of winners rather than a single alternative. In this section, we show that the same distortion bounds for single-winner rules can also be obtained for multiple-winner rules (under the sum of costs objective), and thus we can use Recursive Copeland to achieve a distortion of 5 even for multiple-winner rules. Additionally, we note that the close relationship between the distortion and fairness ratio extends to multiple-winner rules, and thus Recursive Copeland has a fairness ratio of at most 7.

First, we demonstrate a simple way to turn a single-winner social choice rule $f$ into an $\ell$-winner rule $f'$ for any $\ell$ and which has the \emph{same} distortion bound.
\begin{theorem}\label{thm:distsets}
Let $f$ be a single-winner social choice rule having distortion at most $\beta$. Let $\ell > 1$. Then the $\ell$-winner social choice rule $f' =$ ``Recursive $f$'' has distortion at most $\beta$.
\end{theorem}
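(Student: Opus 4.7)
The key structural fact is that the multi-winner social cost is additive over the chosen alternatives: $\phi(S,d) = \sum_{c \in S} \phi(\{c\},d)$. Consequently, the optimal set of size $\ell$ under metric $d$ is simply the $\ell$ alternatives with the smallest single-winner social costs; write them $c^*_1,\ldots,c^*_\ell$ in nondecreasing order of $\phi(\{c^*_i\},d)$. My plan is to show inductively that the $i$-th pick $c_i$ of Recursive $f$ satisfies $\phi(\{c_i\},d) \le \beta \, \phi(\{c^*_i\},d)$, and then sum over $i$.

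First I would fix an instance $\sigma$ and any metric $d \in \rho(\sigma)$. For each $i = 1,\ldots,\ell$, let $\cc_i = \cc \setminus \{c_1,\ldots,c_{i-1}\}$ and let $\sigma_i$ denote the instance obtained from $\sigma$ by restricting each agent's ranking to $\cc_i$. Restricting $d$ to $\V \times \cc_i$ still yields a q-metric consistent with $\sigma_i$ (the quadrilateral inequality and the ordering constraints are both preserved under deletion of alternatives), and single-winner social costs with respect to this restriction agree with $\phi(\{c\},d)$ for every $c \in \cc_i$. Since $c_i = f(\sigma_i)$ and $f$ has distortion at most $\beta$, this immediately gives
\[
\phi(\{c_i\},d) \;\le\; \beta \cdot \min_{c \in \cc_i} \phi(\{c\},d).
\]

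The main step — and the only place the argument requires care — is to bound the right-hand side by $\beta \phi(\{c^*_i\},d)$. Here I would use the following counting observation: the set $\{c^*_1,\ldots,c^*_i\}$ has $i$ elements, while $\cc \setminus \cc_i = \{c_1,\ldots,c_{i-1}\}$ has only $i-1$ elements, so at least one $c^*_j$ with $j \le i$ lies in $\cc_i$. Since $\phi(\{c^*_j\},d) \le \phi(\{c^*_i\},d)$ by the ordering of the $c^*$'s, we conclude
\[
\min_{c \in \cc_i} \phi(\{c\},d) \;\le\; \phi(\{c^*_i\},d),
\]
and hence $\phi(\{c_i\},d) \le \beta \phi(\{c^*_i\},d)$.

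Summing over $i = 1,\ldots,\ell$ gives $\phi(\{c_1,\ldots,c_\ell\},d) \le \beta \sum_{i=1}^\ell \phi(\{c^*_i\},d) = \beta \cdot \min_{T \subseteq \cc:|T|=\ell} \phi(T,d)$. Since this holds for every $d \in \rho(\sigma)$ and every $\sigma$, the distortion of $f'$ is at most $\beta$. The only subtlety is the index-matching step — naively one might try to bound the remaining minimum by $\phi(\{c^*_\ell\},d)$ or $\phi(\{c^*_1\},d)$, which would be too loose or too optimistic; the pigeonhole on $\{c^*_1,\ldots,c^*_i\}$ versus $\cc \setminus \cc_i$ is what gives the correct alignment needed for the sum to telescope into exactly $\beta \cdot \mathrm{OPT}$.
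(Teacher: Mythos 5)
Your proposal is correct and takes essentially the same route as the paper's proof: apply the single-winner distortion bound of $f$ to each restricted instance $\sigma_i$ (using that $d$ remains a consistent metric after deleting $c_1,\dots,c_{i-1}$), match the $i$-th pick with an element of the comparison set that still survives in $\cc_i$, and sum over $i$ using additivity of $\phi$. The only difference is bookkeeping: the paper compares against an arbitrary set $T$ reordered so that shared elements appear first (guaranteeing $c_i' \in \cc_i$), whereas you specialize to the cost-sorted optimal set and use pigeonhole to get $\min_{c \in \cc_i}\phi(c,d) \le \phi(c^*_i,d)$ — both yield the same per-index bound and the same conclusion.
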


Thus any distortion results obtainable for single-winner social choice rules can also be obtained for $\ell$-winner rules when we care about the sum of distances from an agent to each chosen alternative. Unfortunately, the results for fairness cannot be easily extended in the same way. In particular, the last few steps of the preceding proof relied on the ability to decompose the numerator and denominator in the calculation of distortion by separating out the costs for the individual alternatives. That is, we used the fact that
\[\phi(c_1,d) + \phi(c_2,d) = \phi(\{c_1,c_2\},d).\]
The analogous statement for fairness does \emph{not} hold, i.e.,
\[\phi_k(c_1,d) + \phi_k(c_2,d) \neq \phi_k(\{c_1,c_2\},d),\]
because the specific set of agents used to calculate the social cost $\phi_k$ can differ in these three cases unless $k=N$. Thus the left hand side of this equation could conceivably be larger than the right hand side, e.g., if the sets of agents farthest from $c_1$ and those farthest from $c_2$ have little overlap. Intuitively, this suggests that the iterative rule $f'$ might perform worse than the original rule $f$ in terms of fairness ratio, because even if each chosen alternative individually has a low fairness ratio, combining the adversarial alternatives into one set could decrease the denominator.

Nevertheless, it turns out that the fairness ratio of the iterative social choice rule $f'$ can't be too much worse than the fairness ratio of $f$, because the relationship between the distortion and fairness ratio extends to multiple-winner rules.
\begin{theorem}\label{thm:gapsets}
For any instance $\sigma$ any set of alternatives $S$,
\[\fratio(S,\sigma) - 2 < \dist(S,\sigma) \leq \fratio(S,\sigma).\]
\end{theorem}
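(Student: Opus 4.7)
The right-hand inequality $\dist(S,\sigma) \leq \fratio(S,\sigma)$ is immediate: for every $d \in \rho(\sigma)$, setting $k = N$ in the definition of $\phi_k$ recovers the total cost $\phi$, so the ratio at $k = N$ in the max defining $\fratio(S,d)$ already equals $\dist(S,d)$. The substantive claim is the left-hand inequality, and my plan is to prove the stronger pointwise statement $\fratio(S,d) \leq \dist(S,d) + 2$ for every $d \in \rho(\sigma)$ and then pass to $\sup_d$.

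Fix $d \in \rho(\sigma)$, a value $1 \leq k \leq N$, and a candidate set $T \subseteq \cc$ with $|T| = \ell$; let $V_S, V_T \subseteq \V$ be the top-$k$ subsets realising $\phi_k(S,d)$ and $\phi_k(T,d)$. Introduce the aggregated cost $f_X(v) := \sum_{c \in X} d(v,c)$, so that $\phi(X,d) = \sum_{v \in \V} f_X(v)$ and $\phi_k(X,d) = \sum_{v \in V_X} f_X(v)$. The key step is to lift the quadrilateral inequality of Definition~\ref{def:qmetric} from single alternatives to the sets $S$ and $T$: since $|S| = |T| = \ell$, fix any bijection $\pi : S \to T$, apply the quadrilateral inequality to $(v, c, \pi(c), v')$ for each $c \in S$, and sum over $c$. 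This yields
\[f_S(v) \leq f_T(v) + f_T(v') + f_S(v') \qquad \forall v, v' \in \V.\]

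From here the argument follows the single-winner proof of Theorem~\ref{thm:2}, with $f_S, f_T$ in place of agent-to-alternative distances. Averaging the previous display over $v' \in \V$ gives $f_S(v) \leq f_T(v) + (\phi(S,d) + \phi(T,d))/N$; summing over $v \in V_S$ and using $\sum_{v \in V_S} f_T(v) \leq \phi_k(T,d)$ (since $V_T$ maximises such sums over $k$-subsets) yields
\[\phi_k(S,d) \leq \phi_k(T,d) + \tfrac{k}{N}\bigl(\phi(S,d) + \phi(T,d)\bigr).\]
Dividing by $\phi_k(T,d)$ and invoking $\phi_k(T,d) \geq (k/N)\,\phi(T,d)$ (the largest $k$-subset sum is at least a $k/N$-fraction of the total) twice gives
\[\frac{\phi_k(S,d)}{\phi_k(T,d)} \leq 2 + \frac{\phi(S,d)}{\phi(T,d)} \leq 2 + \dist(S,d),\]
where the final step uses $\phi(T,d) \geq \min_{T'} \phi(T',d)$. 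For each $k$, choosing $T$ to attain $\min_{T'} \phi_k(T',d)$ and then maximising over $k$ gives $\fratio(S,d) \leq \dist(S,d) + 2$, and taking $\sup_d$ yields $\fratio(S,\sigma) \leq \dist(S,\sigma) + 2$.

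The main technical hurdle I expect is the set-level quadrilateral inequality in the first display: Theorem~\ref{thm:2} invokes the quadrilateral inequality once per pair of agents, whereas here one must apply it once per matched pair of alternatives and then sum across a bijection between $S$ and $T$; the rest is a routine lift of the single-winner averaging and top-$k$ bookkeeping. The strict inequality $\fratio(S,\sigma) - 2 < \dist(S,\sigma)$ is a finer point: every ``$\leq$'' above is strict unless several degenerate equalities (e.g.\ $f_T$ constant on $\V$, or $V_S = V_T$ with $T$ simultaneously distortion-optimal) hold at once, and a short case analysis or a mild perturbation of $d$ consistent with $\sigma$ rules this out, converting ``$\leq$'' to ``$<$'' after taking $\sup_d$.
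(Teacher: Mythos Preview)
Your argument for the non-strict bound $\fratio(S,\sigma)\le\dist(S,\sigma)+2$ is correct, and in two places it is actually cleaner than the paper's. First, you derive the set-level quadrilateral inequality by summing along a single bijection $\pi:S\to T$ ($\ell$ inequalities), whereas the paper sums over all $\ell^{2}$ pairs $c_1\in S,\ c_2\in T$ and divides by $\ell$; both yield $f_S(v)\le f_T(v)+f_T(v')+f_S(v')$. Second, you average this inequality over \emph{all} $v'\in\V$, which directly gives $\phi_k(S,d)\le\phi_k(T,d)+\tfrac{k}{N}(\phi(S,d)+\phi(T,d))$ and avoids the paper's case split into $k\le N/2$ versus $k>N/2$ (Theorem~\ref{thm:2}). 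The paper reduces Theorem~\ref{thm:gapsets} to Theorem~\ref{thm:2} verbatim after establishing the set-level inequality; your route is a genuine, slightly more streamlined alternative for the weak bound.

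There is, however, a real gap in the \emph{strict} inequality. Your final paragraph asserts that pointwise strictness or a ``mild perturbation of $d$'' upgrades $\le$ to $<$ after taking $\sup_d$, but this is false in general: from $\fratio(S,d)<\dist(S,d)+2$ for every $d$ one cannot conclude $\sup_d\fratio(S,d)<\sup_d\dist(S,d)+2$. The paper obtains strictness differently: its case analysis in Theorem~\ref{thm:2} yields the \emph{uniform} quantitative bound $\fratio(S,d)\le\dist(S,d)+2\tfrac{N-1}{N}$ for every $d\in\rho(\sigma)$ (with $N$ fixed by $\sigma$), and since $2\tfrac{N-1}{N}<2$ this survives the supremum. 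Your averaging over all $v'$ loses exactly this slack, because when $k=N$ the inequality $\phi_k(T,d)\ge\tfrac{k}{N}\phi(T,d)$ is tight. To recover the strict statement you would need either to reinstate a small case split (e.g.\ handle $k=N$ separately, where fairness and distortion coincide, and show a uniform gap for $k\le N-1$) or to otherwise extract a bound of the form $2-\epsilon(N)$ with $\epsilon(N)>0$ independent of $d$.
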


\begin{corollary}
Let $f$ be a single-winner social choice rule which has fairness ratio at most $\beta$. Let $\ell > 1$. Then $\ell$-winner social choice rule $f' = $ ``Recursive $f$'' has fairness ratio at most $\beta + 2$.
\end{corollary}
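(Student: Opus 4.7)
The plan is to chain together three results already established in the excerpt: Theorem~\ref{thm:leq} (distortion is bounded by fairness ratio), Theorem~\ref{thm:distsets} (Recursive $f$ preserves distortion bounds), and Theorem~\ref{thm:gapsets} (the distortion--fairness gap of at most 2 extends to multi-winner settings).

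First, I would observe that the hypothesis that $f$ has fairness ratio at most $\beta$ implies, via the trivial direction of the relationship (Theorem~\ref{thm:leq} applied to single-winner rules), that the distortion of $f$ is also at most $\beta$. Next, I invoke Theorem~\ref{thm:distsets} on the single-winner rule $f$ with its distortion bound of $\beta$, which yields that the $\ell$-winner recursive rule $f'$ has distortion at most $\beta$; more precisely, for every instance $\sigma$, $\dist(f'(\sigma), \sigma) \leq \beta$.

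Finally, I would apply Theorem~\ref{thm:gapsets} to the set $S = f'(\sigma)$: the theorem guarantees $\fratio(f'(\sigma),\sigma) < \dist(f'(\sigma),\sigma) + 2 \leq \beta + 2$. Since this holds for every instance $\sigma$, by the definition of the fairness ratio of a rule we conclude that $f'$ has fairness ratio at most $\beta + 2$, establishing the corollary.

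There is essentially no obstacle here: the corollary is a direct composition of the two main structural results of the paper (the distortion-preservation of recursive application, and the $+2$ gap between distortion and fairness ratio for multi-winner rules). All of the technical work lives inside Theorems~\ref{thm:distsets} and~\ref{thm:gapsets}, and once those are in hand the argument is a one-line chain of inequalities. The only point worth flagging is that one must apply Theorem~\ref{thm:gapsets} to the specific set $f'(\sigma)$ output by the recursive rule rather than to an arbitrary optimal set, since the bound we have from Theorem~\ref{thm:distsets} is precisely on the distortion of that output set.
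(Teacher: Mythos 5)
Your argument is correct and is essentially identical to the paper's own proof: it chains Theorem~\ref{thm:leq} (fairness bound implies distortion bound for $f$), Theorem~\ref{thm:distsets} (Recursive $f$ inherits the distortion bound $\beta$), and Theorem~\ref{thm:gapsets} applied to the output set $f'(\sigma)$ to get $\fratio(f'(\sigma),\sigma) < \dist(f'(\sigma),\sigma) + 2 \leq \beta + 2$ for every instance. No gaps; your remark about applying Theorem~\ref{thm:gapsets} to the specific set $f'(\sigma)$ is exactly the right point of care.
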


This result leads to a low constant-factor bound for the fairness ratio of Recursive Copeland (though we conjecture that the true fairness ratio is lower).
\begin{corollary}\label{cor:recursive}
The $\ell$-winner rule Recursive Copeland has distortion at most 5 and fairness ratio at most 7.
\end{corollary}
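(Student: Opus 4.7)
\medskip

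The plan is to derive the corollary as a direct combination of the two structural results already proved earlier in the section (Theorem~\ref{thm:distsets} and Theorem~\ref{thm:gapsets}) together with the known distortion bound of 5 for single-winner Copeland from \cite{anshelevich2015approximating}. There is essentially no new combinatorial work required: each ingredient is already in place, so this is a ``glue'' argument.

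First I would invoke the single-winner distortion bound: Anshelevich et al.\ \cite{anshelevich2015approximating} establish that the (single-winner) Copeland rule has worst-case distortion at most 5. I would then apply Theorem~\ref{thm:distsets} with $f = $ Copeland and $\beta = 5$; this immediately yields that the $\ell$-winner rule Recursive Copeland, defined by iteratively applying Copeland on the successively pruned alternative set, also achieves $\dist(f'(\sigma),\sigma) \leq 5$ for every instance $\sigma$ and every desired set size $\ell > 1$. This handles the distortion half of the statement.

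For the fairness ratio half, I would fix an arbitrary instance $\sigma$ and let $S = f'(\sigma)$ be the set chosen by Recursive Copeland. By Theorem~\ref{thm:gapsets} applied to this set $S$, we have
\[
\fratio(S,\sigma) < \dist(S,\sigma) + 2 \leq 5 + 2 = 7,
\]
where the second inequality uses the distortion bound established in the previous step. Since $\sigma$ was arbitrary, this shows that the worst-case fairness ratio of Recursive Copeland is at most 7, completing the proof.

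There is no real obstacle here: the only thing to be slightly careful about is that Theorem~\ref{thm:gapsets} is stated for arbitrary sets $S$ (not just those produced by a particular rule), so it applies verbatim to $S = f'(\sigma)$, and that Theorem~\ref{thm:distsets} transfers the \emph{worst-case} distortion bound from $f$ to $f'$ uniformly in $\sigma$, which is what we need to chain the two inequalities. I would also include a remark, consistent with the surrounding text, noting that the lower bound of 5 on the fairness ratio carries over from the single-winner Copeland lower bound \cite{anshelevich2015approximating,goel2017metric} (by considering instances where all agents prefer $\ell-1$ fixed ``dummy'' alternatives over the remaining ones, reducing the $\ell$-winner problem to a single-winner instance), so the true fairness ratio of Recursive Copeland lies somewhere in $[5,7]$.
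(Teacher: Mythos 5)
Your proposal is correct and follows essentially the same route as the paper: chain the known single-winner Copeland bound of 5 through Theorem~\ref{thm:distsets} to get distortion at most 5 for Recursive Copeland, then apply Theorem~\ref{thm:gapsets} to conclude a fairness ratio below 7. The only (immaterial) difference is that you start from the distortion bound of \cite{anshelevich2015approximating} directly, whereas the paper starts from the fairness bound of \cite{goel2017metric} and passes through Theorem~\ref{thm:leq} to obtain the same distortion bound for Copeland.
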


\section{Conclusion}
We demonstrated that the distinct notions of distortion and fairness ratio are in fact closely linked -- within an additive factor of 2. This further lead to new bounds on the fairness ratio for several common social choice rules for which only bounds on the distortion were previously known, including STV and various scoring rules. Additionally, we showed that the distortion of any single-winner rule can also be obtained by a recursive multiple-winner rule. Together with the relationship between the distortion and fairness ratio for multiple-winner rules, this implied that Recursive Copeland achieves distortion at most 5 and fairness ratio at most 7.

\section*{Acknowledgments}
Ashish Goel and Anilesh K. Krishnaswamy were supported in part by NSF grant no. CCF-1637418,
ONR grant no. N00014-15- 1-2786 and ARO grant no. W911NF-14-1-0526. Reyna Hulett's research was
supported in part by NSF GRFP grant DGE-1656518.

\appendix

\section{All proofs}
\begin{proof}[Proof of Theorem \ref{thm:leq}]
Observe that for any fixed metric $d$, $\phi_N(c,d) = \phi(c,d)$, and therefore
\begin{align*}
\dist(c,d) &= \frac{\phi(c,d)}{\min_{c' \in \cc} \phi(c',d)} \leq \max_{1 \leq k \leq N} \frac{\phi_k(c,d)}{\min_{c' \in \cc} \phi_k(c',d)} \\
&= \fratio(c,d) \leq \fratio(c,\sigma),
\end{align*}
since $\fratio(c,\sigma)$ is the worst-case fairness ratio over all metrics. This implies $\fratio(c,\sigma)$ upper bounds $\dist(c,d)$ for every $d$, and thus by the definition of supremum,
\[\dist(c,\sigma) = \sup_{d \in \rho(\sigma)} \dist(c,d) \leq \fratio(c,\sigma),\]
as desired.
\qed\end{proof}

\begin{proof}[Proof of Theorem \ref{thm:2}]
Denote the $N$ agents of $\sigma$ by $\V$ and the $m$ alternatives by $\cc$. We will establish that for any fixed $k$ and metric $d$,
\begin{equation}\label{eq:goal}
\frac{\phi_k(c,d)}{\min_{c' \in \cc} \phi_k(c',d)} - 2\frac{N-1}{N} \leq \dist(c,d),
\end{equation}
and thus taking the maximum over all $k$,
\begin{align*}
\fratio(c,d) - 2\frac{N-1}{N} &= \max_{1 \leq k \leq N} \frac{\phi_k(c,d)}{\min_{c' \in \cc} \phi_k(c',d)} - 2\frac{N-1}{N} \\
& \leq \dist(c,d) \leq \dist(c,\sigma),
\end{align*}
since $\dist(c,\sigma)$ is the worst-case distortion over all metrics. Thus by the definition of supremum,
\[\fratio(c,\sigma) - 2 = \sup_{d \in \rho(\sigma)} \fratio(c,d) - 2 < \dist(c,\sigma),\]
as desired.

For convenience, we denote the adversarial alternative for fairness by $\opt = \argmin_{c' \in \cc} \phi_k(c',d)$, and the set of $k$ agents farthest from $c$ by $V_k = \argmax_{V \subseteq \V : |V| = k} \sum_{v \in V} d(v, c)$. Furthermore, let the numerator and denominator of the fairness ratio be denoted respectively as $f_k = \phi_k(c,d)$ and $g_k = \phi_k(\opt,d)$. Thus we can rewrite the desired result \ref{eq:goal} as
\[\frac{f_k}{g_k} - 2\frac{N-1}{N} = \frac{\sum_{v \in V_k} d(v,c)}{ \phi_k(\opt,d) } - 2\frac{N-1}{N} \leq \frac{ \phi(c,d) }{ \min_{c' \in \cc} \phi(c',d) }.\]

We proceed by calculating a lower bound on $\frac{ \phi(c,d) }{ \phi(\opt,d) }$, which will immediately also apply to $\frac{ \phi(c,d) }{ \min_{c' \in \cc} \phi(c',d) }$. We divide the agents into two groups, $V_k$ and $\V \setminus V_k$. Note that by definition, $\sum_{v \in V_k} d(v, c) = f_k$, and furthermore that 
\begin{align*}
\sum_{v \in V_k} d(v, \opt) \leq \max_{V \subseteq \V : |V| = k} \sum_{v \in V} d(v, \opt) = g_k, 
\end{align*}
as shown in Fig.~\ref{fig:graph}.

\begin{figure}
\begin{center}
 \begin{tikzpicture}
	\tikzstyle{dot} = [draw, fill, circle, inner sep=1.5pt]
	
	\node (alg) at (-.5,2) [dot] {};
	\node at (alg) [left] {$c$};
	\node (opt) at (6.5,2) [dot] {};
	\node at (opt) [right] {$\opt$};
	
	\node (v) at (3,1) [dot] {};
	\node at (v) [below] {$v$};
	
	\node (s1) at (3,2.65) [dot] {};
	\node (s2) at (3,3.5) [dot] {};
	\node (s3) at (3,4) [dot] {};
	\node at (3,3.15) {$\vdots$};
	\node at (3,4.5) {$V_k$};
	\draw (3,3.3) ellipse (0.5cm and 1cm);
	
    \draw (alg) -- (s1);
    \draw (alg) -- (s2);
    \draw (alg) -- (s3) node [midway,sloped,above] {$f_k$};
    \draw (alg) -- (v) node [midway,sloped,below] {$\geq \frac{f_k-g_k}{k} - d(v,\opt)$};
    
    \draw (opt) -- (s1);
    \draw (opt) -- (s2);
    \draw (opt) -- (s3) node [midway,sloped,above] {$\leq g_k$};
    \draw (opt) -- (v) node [midway,sloped,below] {$d(v,\opt)$};

\end{tikzpicture}
\end{center}
\caption{A partial representation of the metric $d$ highlighting the set of $k$ agents farthest from $c$, $V_k$, and an arbitrary agent $v \notin V_k$, with distances. Note that the top two distances refer to \emph{total} distance to $V_k$.}\label{fig:graph}
\end{figure}
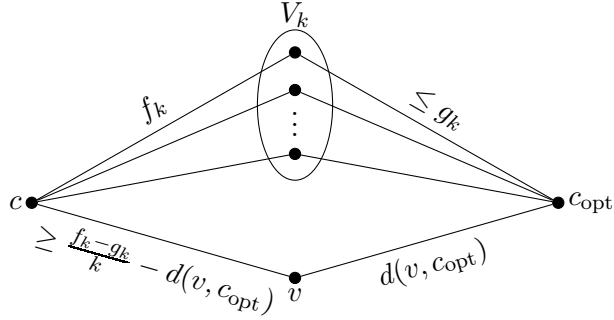

Now, consider any $v \notin V_k$. For any $w \in V_k$, we have $d(w,c) \leq d(w,\opt) + d(v,c) + d(v,\opt)$ by the quadrilateral inequality. Solving for $d(v,c)$ and averaging over all $k$ nodes in $V_k$, we get
\begin{align}\label{eq:z}
d(v,c) &\geq \frac{\sum_{w \in V_k} \left( d(w,c) - d(w,\opt) - d(v,\opt) \right)}{k} \nonumber \\ 
&\geq \frac{f_k - g_k}{k} - d(v, \opt).
\end{align}

We can now calculate a lower bound on the distortion of $c$ relative to $\opt$.
\begingroup
\allowdisplaybreaks
\begin{align*}
\frac{\phi(c,d)}{\phi(\opt,d)} &= \frac{ f_k + \sum_{v \notin V_k} d(v,c) }{ \sum_{v \in \V} d(v, \opt) } \\
&\geq \frac{ f_k + \sum_{v \notin V_k} d(v,c) }{ N \frac{g_k}{k} } \stepcounter{equation}\tag{\theequation}\label{eq:avg} \\
&\geq \frac{ f_k + (N-k)\frac{f_k-g_k}{k} - \sum_{v \notin V_k} d(v,\opt) }{ N \frac{g_k}{k} } \stepcounter{equation}\tag{\theequation}\label{eq:zz} \\
&= \frac{f_k}{g_k} - \frac{N-k}{N} - \frac{\sum_{v \notin V_k} d(v, \opt)}{N \frac{g_k}{k}},
\end{align*}%
\endgroup
where (\ref{eq:avg}) follows because the average distance from an agent to $\opt$ can't be more than $\frac{1}{k}$ times the sum of the largest $k$ distances, and (\ref{eq:zz}) follows from (\ref{eq:z}). We now consider two cases. In the first, let $k \leq \frac{N}{2}$. Then $|\V \setminus V_k| \geq k$ so we can repeat the argument that the average distance from an agent in $\V \setminus V_k$ can't be more than $\frac{1}{k}$ times the sum of the largest $k$ distances for any agents. Thus
\begin{align*}
\frac{ \phi(c,d) }{ \phi(\opt,d) } &\geq \frac{f_k}{g_k} - \frac{N-k}{N} - \frac{\sum_{v \notin V_k} d(v, \opt)}{N \frac{g_k}{k}} \\
&\geq \frac{f_k}{g_k} - \frac{N-k}{N} - \frac{(N-k)\frac{g_k}{k}}{N \frac{g_k}{k}} \\
&= \frac{f_k}{g_k} - 2\frac{N-k}{N} \geq \frac{f_k}{g_k} - 2\frac{N-1}{N},
\end{align*}
as desired. (Note that the last inequality is tight when $k=1$.) On the other hand, if $k \geq \frac{N}{2}$, then $|\V \setminus V_k| \leq k$ so we can only say $\sum_{v \notin V_k} d(v, \opt) \leq g_k$. Thus
\begin{align*}
\frac{ \phi(c,d) }{ \phi(\opt,d) } &\geq \frac{f_k}{g_k} - \frac{N-k}{N} - \frac{\sum_{v \notin V_k} d(v, \opt)}{N \frac{g_k}{k}} \\
&\geq \frac{f_k}{g_k} - \frac{N-k}{N} - \frac{g_k}{N \frac{g_k}{k}} \\
&= \frac{f_k}{g_k} - 1 \geq \frac{f_k}{g_k} - 2\frac{N-1}{N},
\end{align*}
as desired, where the last inequality holds provided $N \geq 2$. (When $N=1$, distortion and fairness are identical so the desired result holds regardless.)
\qed\end{proof}

\begin{proof}[Proof of Theorem \ref{thm:tight}]
As shown in Fig.~\ref{fig:tight}, for $\sigma_N$, we let $\cc = \{c_1,c_2\}$ where one agent $v$ votes $c_2 \succ_v c_1$ and the remaining $N-1$ agents vote $c_1 \succ_{v'} c_2$ for $v' \in \V \setminus \{v\}$. As $N \to \infty$, the distortion of $c_1$ approaches 1, but as the example metric shows, the fairness ratio of $c_1$ is 3. Thus $\fratio(c_1,\sigma_N) - \dist(c_1,\sigma_N)$ approaches 2 as $N$ approaches $\infty$.
\qed\end{proof}

\begin{proof}[Proof of Theorem \ref{thm:distsets}]
We wish to show that for any instance $\sigma$ with $N$ agents $\V$ and $m \geq \ell$ alternatives $\cc$, $\dist(f',\sigma) \leq \beta$. Equivalently, for any fixed metric $d$ and adversarial set $T$ of size $\ell$, we wish to show
\[ \frac{ \phi(f'(\sigma),d) }{ \phi(T,d) } \leq \beta. \]
Let $f'(\sigma) = \{c_1, \dots, c_\ell\}$ in the order in which they are selected by Recursive $f$. Furthermore, order $T$ as $\{c_1', \dots, c_\ell'\}$ such that the elements $f'(\sigma) \cap T$ (if any) appear first and in the same order as in $f'(\sigma)$. Note that this implies if any $c_j = c_i' \in T$, then $i \leq j$. Let $\sigma_i = \sigma \setminus \{c_1, \dots, c_{i-1}\}$ be the instance obtained from $\sigma$ by deleting candidates $\{c_1,\dots,c_{i-1}\}$ from every ranking, so $c_i = f(\sigma_i)$, and analogously let $\cc_i$ be the set of alternatives remaining in $\sigma_i$, $\cc_i = \cc \setminus \{c_1,\dots,c_{i-1}\}$. Then we know that
\[ \sup_{d \in \rho(\sigma_i)} \frac{\phi(f(\sigma_i),d)}{\min_{c \in \cc_i} \phi(c,d)}  \leq \beta, \]
simply because the distortion of $f$ is at most $\beta$.

However, observe that the specific metric $d$ we are considering satisfies $d \in \rho(\sigma) \subseteq \rho(\sigma_i)$ (because any consistent metric for $\sigma$ is also consistent when considering only the alternatives $\cc_i$ remaining in $\sigma_i$). Additionally, we must have $c_i' \in \cc_i = \cc \setminus \{c_1,\dots,c_{i-1}\}$ because of the way we ordered the alternatives of $T$ -- otherwise, we would have $c_i' = c_j$ for some $j < i$, which is a contradiction. Thus, we can look specifically at the metric $d$ and adversarial alternative $c_i'$ to get
\[ \frac{\phi(f(\sigma_i),d)}{\phi(c_i',d)}  \leq \beta. \]
Since this holds for every $i$, we can combine these equations to obtain
\begin{align*}
\frac{\phi(f(\sigma_1),d) + \cdots + \phi(f(\sigma_\ell),d)}{\phi(c_1',d) + \cdots + \phi(c_\ell',d)} &\leq \beta \\
\frac{\sum_{c \in f'(\sigma)} \phi(c,d)}{\sum_{c \in T} \phi(c,d)} &\leq \beta \\
\frac{\phi(f'(\sigma),d)}{\phi(T,d)} &\leq \beta,
\end{align*}
as desired.
\qed\end{proof}

\begin{proof}[Proof of Theorem \ref{thm:gapsets}]
As with single-winner rules, $$\dist(S,\sigma) \leq \fratio(S,\sigma)$$ holds because fairness bounds the largest $k$ agent costs simultaneously for all $k$, including $k=N$ which is equivalent to distortion.

To show $\fratio(S,\sigma) - 2 < \dist(S,\sigma)$, we first observe that the quadrilateral inequality holds for fixed-size \emph{sets} of alternatives. That is, if we define $d(v,S) = \sum_{c \in S} d(v,c)$ then for any $v,w \in \V$, $S,T \subseteq \cc$, $|S|=|T|=\ell$, we have $d(v,S) \leq d(v,T) + d(w,S) + d(w,T)$. This can be established from the quadrilateral inequality $d(v,c_1) \leq d(v,c_2) + d(w,c_1) + d(w,c_2)$ applied to every pair $c_1 \in S,c_2 \in T$. Summing these inequalities over the $\ell^2$ such pairs and dividing by $\ell$ gives the desired statement for sets of $\ell$ elements. The remainder of the proof is identical to that of Theorem~\ref{thm:2} where we replace the single alternatives $c,\opt$ with sets $S,S_\mathrm{opt}$, so we omit the details.
\qed\end{proof}

\begin{proof}[Proof of Corollary \ref{cor:recursive}]
By Theorem~\ref{thm:leq}, if the fairness ratio of $f$ is at most $\beta$ then the distortion of $f$ is also at most $\beta$. Theorem~\ref{thm:distsets} then implies the distortion of $f'$ is at most $\beta$. Thus using Theorem~\ref{thm:gapsets}, for any instance $\sigma$, we have
\[\fratio(f'(\sigma),\sigma) < \dist(f'(\sigma),\sigma) + 2 \leq \beta + 2.\]
By definition, this means the fairness ratio of $f'$ is at most $\beta+2$, as desired.
\qed\end{proof}

\section{Calculating Fairness}\label{app:calc_fairness}
In what follows, we outline a straightforward way to calculate the fairness ratio exactly, similar to the program given by Goel et al. \cite{goel2017metric} for calculating distortion. Although we do not have a polynomial time algorithm for calculating the fairness ratio, it can be computed with the following binary linear program on an instance $\sigma$ and alternative $c$ for a fixed $k$ and adversarial alternative $\opt$. (Note that as with the program of Goel et al. \cite{goel2017metric}, this program could be used on any specific instance to choose the alternative with lowest worst-case fairness ratio -- though in this case the resulting social choice rule would not necessarily run in polynomial time.) Here we use $\{v_i : i \in [N]\}$ to represent the agents.

\begin{algorithm}
\begin{equation*}
\begin{array}{rrclcl}
\max & \multicolumn{3}{l}{\displaystyle \sum_{i=1}^N d_i} \\
\\
\textrm{s.t.} & d_i & \leq & d(v_i, c) & & \forall i \\
& d_i & \leq & M b_i & & \forall i \\
&\displaystyle \sum_{i=1}^N b_i & \leq & k \\
\\
&\text{binary } b_i \\
\\
& kt + \displaystyle \sum_{i=1}^N d_i^{(\mathrm{opt})} & \leq & 1 \\
& d_i^{(\mathrm{opt})} & \geq & 0 & & \forall i \\
& d_i^{(\mathrm{opt})} & \geq & d(v_i,\opt) - t \\
\\
& d & \in & \rho(\sigma)
\end{array}
\end{equation*}
\caption{A binary linear program for calculating fairness.}\label{alg:blp}
\end{algorithm}

In the above program, $M$ is any sufficiently large number, e.g., $3^m$ where $m$ is the number of alternatives. Recall that $\rho(\sigma)$ is the set of all valid metrics consistent with $\sigma$; this requirement can also be encoded with a polynomial number of inequalities on the distances $d(v_i,c_j)$ for $i \in [N], j \in [m]$ \cite{goel2017metric}.

We will say a word about the choice of $M$. The program above essentially works by allowing the $d_i$'s to represent the $k$ largest distances $d(v_i, c)$, and 0 for any other distances. (Meanwhile, the $k$ largest distances $d(v_i,\opt)$ are represented by $d_i^{(\mathrm{opt})}$, normalized so that their sum is at most 1; thus maximizing $\sum_{i=1}^N d_i$ is equivalent to maximizing the worst-case fairness \emph{ratio}.) Thus when $b_i =0$, we have $d_i \leq M b_i = 0$, so $v_i$ is not included in the largest $k$ distances, and when $b_i = 1$ we want $d_i \leq M b_i = M$ to be unrestrictive. That is, we would like $M$ to be sufficiently large that, subject to the normalization of the distances to $\opt$, $d(v_i,c) \leq M$ for any $i$ and metric $d$.

Strictly speaking, no $M$ can guarantee this, because for some instances $\sigma$ the fairness ratio can be unbounded. However, the following lemma gives a simple way to tell whether the fairness ratio of one alternative over another is unbounded, which will also allow us to give an upper bound on $M$ when it \emph{is} bounded.

\begin{lemma}
Consider an instance $\sigma$ with $|\cc| = m$ alternatives $c_i$ and $|\V| = N$ agents $v_i$. Furthermore, consider a directed graph on $\cc$, $D=(\cc,E)$, where $(c_i,c_j) \in E$ whenever at least one agent prefers $c_i$ over $c_j$, i.e., whenever $\exists v c_i \succ_v c_j$. Then the fairness ratio (and therefore the distortion) of an alternative $\alg$ over $\opt$ is bounded if and only if there exists a directed path $\alg \to \opt$ in $D$. 
\end{lemma}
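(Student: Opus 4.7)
The plan is to prove both directions of the equivalence. Since the fairness ratio upper-bounds the distortion (Theorem~\ref{thm:leq}), it suffices to show that the distortion is bounded when a path exists, and that the fairness ratio is unbounded when no path exists. The main tools are the quadrilateral inequality (for the upper bound) and an explicit adversarial metric (for the lower bound).

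For the forward direction, suppose a directed path $\alg = u_0 \to u_1 \to \cdots \to u_L = \opt$ exists in $D$, and for each edge $(u_j, u_{j+1})$ let $w_j$ be a witnessing agent with $u_j \succ_{w_j} u_{j+1}$, so $d(w_j, u_j) \leq d(w_j, u_{j+1})$ in any consistent metric $d$. Applying the quadrilateral inequality yields, for any agent $v$,
\[d(v, u_j) \leq d(v, u_{j+1}) + d(w_j, u_j) + d(w_j, u_{j+1}) \leq d(v, u_{j+1}) + 2\, d(w_j, u_{j+1}).\]
Telescoping along the path gives $d(v, \alg) \leq d(v, \opt) + 2 \sum_{j=0}^{L-1} d(w_j, u_{j+1})$. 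The remaining step is to bound each error term $d(w_j, u_{j+1})$: applying the same telescoping argument with $w_j$ playing the role of ``agent'' on the suffix $u_{j+1} \to \cdots \to u_L$ gives $d(w_j, u_{j+1}) \leq d(w_j, \opt) + 2 \sum_{i > j} d(w_i, u_{i+1})$, which I solve recursively backwards from $i = L-1$ to bound each error term by some constant (depending on $L$, hence on $m$) times $\max_i d(w_i, \opt)$. Summing over $v$ and noting $d(w_i, \opt) \leq \phi(\opt, d)$, we obtain $\phi(\alg, d) \leq C \cdot \phi(\opt, d)$ for a finite constant $C$, bounding the distortion; by Theorem~\ref{thm:2}, the fairness ratio is bounded as well.

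For the backward direction, let $S \subseteq \cc$ be the set of alternatives reachable from $\alg$ in $D$ (including $\alg$ itself), and $T = \cc \setminus S$; by assumption $\opt \in T$. Since no edge of $D$ leaves $S$, no agent prefers any $c \in S$ over any $c' \in T$, so every agent ranks every $T$-alternative above every $S$-alternative. Consider the metric that places all agents and all $T$-alternatives at the point $0 \in \mathbb{R}$ and all $S$-alternatives at the point $1$. A quick check shows this is a valid q-metric (the quadrilateral inequality $d(v,c) \leq d(v,c') + d(v',c') + d(v',c)$ is trivial when $d(v,c) = 0$; otherwise $c \in S$, and the term $d(v', c) = 1$ on the right already suffices) and it is consistent with $\sigma$ (the only potentially violating case would be $c \in S$ ranked above $c' \in T$ by some agent, which we just ruled out). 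Under this metric, $\phi_k(\alg, d) = k$ and $\phi_k(\opt, d) = 0$ for all $k$, so both the fairness ratio and the distortion are unbounded.

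The main obstacle is the recursive bookkeeping in the forward direction: substituting the bound on $d(w_j, u_{j+1})$ into itself produces a constant $C$ that grows exponentially in the path length $L$, hence in $m$. This is sufficient for the qualitative boundedness claim stated in the lemma, but any quantitative bound would demand more delicate analysis (for instance, amortizing the contributions of the witnessing agents globally rather than via naive backwards induction). The backward direction is relatively clean, relying only on the structural observation that the absence of edges out of $S$ forces uniform preference separation, which then makes the two-point construction trivially valid.
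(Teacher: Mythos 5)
Your proof is correct and follows essentially the same route as the paper's: a two-cluster consistent metric separating the alternatives reachable from $\alg$ when no path exists, and repeated application of the quadrilateral inequality along the path via witnessing agents when one does, yielding a finite (exponential-in-$m$) bound just as the paper's $N3^{m-1}$ bound is. Two cosmetic points: place the two clusters at distance $\epsilon \to 0$ rather than $0$ so the unbounded ``ratio'' is well-defined, and note that Theorem~\ref{thm:2} as stated uses denominators minimized over all alternatives, whereas the lemma concerns the ratio over the fixed $\opt$ -- but your per-agent bound $d(v,\alg) \leq d(v,\opt) + C\max_i d(w_i,\opt)$ already bounds $\phi_k(\alg,d)$ by a constant times $\phi_k(\opt,d)$ directly, so the appeal to Theorem~\ref{thm:2} is not needed.
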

\begin{proof}
($\implies$) We prove the contrapositive, namely, that if no directed path $\alg \to \opt$ exists, then the fairness ratio is unbounded. Let $S$ be the set of vertices (alternatives) which have a directed path to $\opt$; by assumption $\alg \notin S$. Consider the following metric: $S$ and $\V$ are co-located at one point, and the remaining alternatives $\cc \setminus S$, are co-located at another; the distance between the two points is 1. (Alternatively, we can say the distance between any pair of ``co-located'' points is $\epsilon \ll 1$.) It is not hard to see that this satisfies the triangle inequality. Furthermore, it is consistent. This holds because every agent must rank every alternative in $S$ above every alternative in $\cc \setminus S$; otherwise there would exist $v, c \in \cc \setminus S, c' \in S$ such that $c \succ_v c'$, implying $(c,c') \in E$. But this, along with the existence of a directed path $c' \to \opt$, implies the existence of a path $c \to \opt$, which contradicts $c \notin S$. Thus we have defined a consistent metric, and in this metric, the fairness ratio is $\frac{1}{\epsilon}$ which approaches $\infty$ as $\epsilon \to 0$.

($\impliedby$) Consider an arbitrary edge $(c_i,c_j) \in E$, and let $d_j$ be any upper bound $d_j \geq d(v,c_j) \ \forall v$. Since $(c_i,c_j) \in E$, there must exist some agent $v'$ such that $c_i \succ_{v'} c_j$. Thus using the quadrilateral inequality, for any agent $v$, we have
\[d(v,c_i) \leq d(v,c_j) + d(v',c_i) + d(v',c_j) \leq d(v,c_j) + 2d(v',c_j) \leq 3 d_j \]
Thus inductively, if there exists a path $\alg \to \opt$, which must have length at most $|\cc| = m$, then $d(v,\alg) \leq 3^{m-1} d_\mathrm{opt}$ where $d_\mathrm{opt}$ is any upper bound on $d(v,\opt)$. For instance, we can set $d_\mathrm{opt} = \max_{v\in \V} d(v, \opt)$. Then the fairness ratio of $\alg$ over $\opt$ for any metric is at most
\[\max_{1 \leq k \leq N} \frac{\phi_k(\alg,\sigma)}{\phi_k(\opt,\sigma)} \leq \frac{N 3^{m-1} d_\mathrm{opt}}{d_\mathrm{opt}} = N 3^{m-1}, \]
and thus is bounded.
\qed\end{proof}

The above proof incidentally suggests a way of setting $M$ to be unrestrictive when the distortion and fairness ratio are bounded. Specifically, since our binary linear program normalizes \[\max_{V \subseteq \V : |V| = k} \sum_V d(v,\opt) \leq 1,\] we can upper bound $d(v,\opt)$ with $d_\mathrm{opt} = 1$ which implies $d(v,c) \leq 3^{m-1} d_\mathrm{opt} = 3^{m-1}$ if the fairness ratio is bounded. Thus we can set $M = 3^m$, and we will get the correct answer whenever the fairness ratio is bounded (and the output will be $k 3^m$ if and only if the fairness ratio is unbounded).

\end{document}